\newacronym{2d}{2D}{two-dimensional}
\newacronym{3d}{3D}{three-dimensional}
\newacronym{3gpp}{3GPP}{3rd generation partnership project}
\newacronym{4g}{4G}{fourth-generation}
\newacronym{5g}{5G}{fifth-generation}
\newacronym{6g}{6G}{sixth-generation}
\newacronym{abp}{ABP}{authentication by personalisation}
\newacronym{aclr}{ACLR}{adjacent channel leakage ratio}
\newacronym{adc}{ADC}{Analog-to-digital converter}
\newacronym{adr}{ADR}{adaptive data rate}
\newacronym{ag}{AG}{array gain}
\newacronym{ai}{AI}{artificial intelligence}
\newacronym{amam}{AM/AM}{amplitude modulation to amplitude modulation}
\newacronym{amp}{AMP}{approximate message passing}
\newacronym{ampm}{AM/PM}{amplitude modulation to phase modulation}
\newacronym{aoa}{AOA}{angle-of-arrival}
\newacronym{aod}{AOD}{angle-of-departure}
\newacronym{ap}{AP}{access point}
\newacronym{apu}{APU}{access point unit}
\newacronym{ar}{AR}{augmented reality}
\newacronym{arp}{ARP}{Antenna Reference Point}
\newacronym{asic}{ASIC}{application specific integrated circuit}
\newacronym{ask}{ASK}{amplitude-shift keying}
\newacronym{auv}{AUV}{autonomous underwater vehicle}
\newacronym{awgn}{AWGN}{additive white Gaussian noise}
\newacronym{baw}{BAW}{bulk acoustic wave}
\newacronym{bb}{BB}{base-band}
\newacronym{bf}{BF}{beamforming}
\newacronym{bldc}{BLDC}{brushless DC}
\newacronym{bom}{BOM}{bill of materials}
\newacronym{bs}{BS}{base station}
\newacronym{bw}{BW}{bandwidth}
\newacronym{cad}{CAD}{channel activity detection}
\newacronym{cars}{CARS}{calibration reference signal}
\newacronym{cbm}{CBM}{condition based maintenance}
\newacronym{cc}{CC}{constant current}
\newacronym{ccdf}{CCDF}{complementary cumulative distribution function}
\newacronym{ccnn}{CCNN}{circular convolutional neural network}
\newacronym{ccs}{CCS}{correlative channel sounder}
\newacronym{cdf}{CDF}{cumulative distribution function}
\newacronym{cdrx}{CDRX}{connected mode DRX}
\newacronym{ce}{CE}{coverage enhancement}
\newacronym{ced}{CED}{cumulative energy density}
\newacronym{cf}{CF}{cell-free}
\newacronym{cfmmimo}{CF-mMIMO}{cell-free massive MIMO}
\newacronym{cfo}{CFO}{carrier frequency offset}
\newacronym{cir}{CIR}{channel impulse response}
\newacronym{cost}{COST}{commercial off-the-shelf}
\newacronym{cots}{COTS}{commercial off-the-shelf}
\newacronym{cp}{CP}{cyclic prefix}
\newacronym{cpt}{CPT}{capacitive power transfer}
\newacronym{cpu}{CPU}{central-processing unit}
\newacronym{cqi}{CQI}{channel quality indicator}
\newacronym{cr}{CR}{coding rate}
\newacronym{crlb}{CRLB}{Cram\'er-Rao lower bound}
\newacronym{crs}{CRS}{cell reference signal}
\newacronym{crc}{CRC}{cyclic redundancy check}
\newacronym{cs}{CS}{compressed sensing}
\newacronym{csi}{CSI}{channel state information}
\newacronym{csp}{CSP}{contact service point}
\newacronym{css}{CSS}{chirp spread spectrum}
\newacronym{cu}{CU}{central unit}
\newacronym{cv}{CV}{constant voltage}
\newacronym{dac}{DAC}{digital-to-analog converter}
\newacronym{daq}{DAQ}{data acquisition system}
\newacronym{dc}{DC}{direct current}
\newacronym{dcc}{DCC}{dynamic cooperation clustering}
\newacronym{de}{DE}{drain efficiency}
\newacronym{dl}{DL}{downlink}
\newacronym{dlc}{DLC}{Distributed Laser Charging}
\newacronym{dma}{DMA}{Direct Memory Access}
\newacronym{dmimo}{D-MIMO}{distributed MIMO}
\newacronym{dpd}{DPD}{digital pre-distortion}
\newacronym{dpdk}{DPDK}{Data Plane Development Kit}
\newacronym{drx}{DRX}{Discontinuous Reception Mode}
\newacronym{duc}{DUC}{digital up-converter}
\newacronym{e2e}{E2E}{end-to-end}
\newacronym{easa}{EASA}{European Union Aviation Safety Agency}
\newacronym{ecdf}{eCDF}{empirical cumulative distribution function}
\newacronym{ecsp}{ECSP}{edge computing service point}
\newacronym{edlc}{EDLC}{electrostatic double-layer capacitors}
\newacronym{edrx}{eDRX}{Extended Discontinuous Reception Mode}
\newacronym{ee}{EE}{energy efficiency}
\newacronym{egprs}{EGPRS}{Enhanced Data Rates for GSM Evolution}
\newacronym{eirp}{EIRP}{effective isotropic radiated power}
\newacronym{em}{EM}{electromagnetic}
\newacronym{embb}{eMBB}{enhanced Mobile Broadband}
\newacronym{en}{EN}{energy neutral}
\newacronym{eol}{EoL}{end of life}
\newacronym{epu}{EPU}{edge processing unit}
\newacronym{erp}{ERP}{effective radiated power}
\newacronym[plural=ESCs,firstplural=electronic speed controllers (ESCs)]{esc}{ESC}{electronic speed control}
\newacronym{etsi}{ETSI}{European Telecommunications Standards Institute}
\newacronym{evm}{EVM}{Error Vector Magnitude}
\newacronym{fa}{FA}{federation anchor}
\newacronym{fembb}{feMBB}{further enhanced mobile broadband}
\newacronym{fft}{FFT}{fast Fourier transform}
\newacronym{fh}{FH}{fronthaul}
\newacronym{fim}{FIM}{Fisher information matrix}
\newacronym{fom}{FoM}{figure of merit}
\newacronym{fpga}{FPGA}{field-programmable gate array}
\newacronym{gb}{GB}{grant-based}
\newacronym{gf}{GF}{grant-free}
\newacronym{gnb}{gNB}{Next Generation Node B}
\newacronym{gnn}{GNN}{graph neural network}
\newacronym{gnss}{GNSS}{global navigation satellite system}
\newacronym{gpclk}{GPCLK}{general purpose clock}
\newacronym{gprs}{GPRS}{General Packet Radio Services}
\newacronym{gps}{GPS}{Global Positioning System}
\newacronym{gpu}{GPU}{graphical processing unit}
\newacronym{gsm}{GSM}{Global System for Mobile Communications}
\newacronym{gwp}{GWP}{Global Warming Potential}
\newacronym{gscm}{GSCM}{geometry‐based stochastic model}
\newacronym{harq}{HARQ}{hybrid automatic repeat request}
\newacronym{hat}{HAT}{hardware attached on top}
\newacronym{hcs}{HCS}{human-centric services}
\newacronym{hpbm}{HPBM}{half power beam width}
\newacronym{HyMPRo}{HyMPRo}{Hybrid Multi-Path Routing algorithm}
\newacronym{ib}{IB}{in-band}
\newacronym{ibo}{IBO}{input back-off}
\newacronym{ic}{IC}{integrated circuit}
\newacronym{id}{ID}{information decoding}
\newacronym{if}{IF}{intermediate-frequency}
\newacronym{iid}{i.i.d.}{independently and identically distributed}
\newacronym{im}{IM}{intermodulation}
\newacronym{imu}{IMU}{inertial measurement unit}
\newacronym{io}{IO}{input/output}
\newacronym{iot}{IoT}{Internet of Things}
\newacronym{ipt}{IPT}{inductive power transfer}
\newacronym{ipy}{IPY}{Interventions per Year}
\newacronym{iq}{IQ}{in-phase and quadrature}
\newacronym{ir}{IR}{infrared}
\newacronym{ism}{ISM}{industrial, scientific and medical}
\newacronym{isp}{ISP}{internet service provider}
\newacronym{kpi}{KPI}{key performance indicator}
\newacronym{kvi}{KVI}{key value indicator}
\newacronym{lca}{LCA}{life cycle assessment}
\newacronym{lco}{LCO}{lithium cobalt oxide}
\newacronym{ldo}{LDO}{Low-dropout voltage regulator}
\newacronym{ldpc}{LDPC}{low-density parity-check}
\newacronym{lfp}{LFP}{lithium iron phosphate}
\newacronym{lic}{LIC}{lithium-ion capacitor}
\newacronym{lidar}{LiDAR}{light detection and ranging}
\newacronym{liion}{Li-ion}{lithium-ion}
\newacronym{lipo}{LiPo}{lithium polymer}
\newacronym{lis}{LIS}{large intelligent surface}
\newacronym{llh}{LLH}{log-likelihood}
\newacronym{lmmse}{LMMSE}{least minimum mean square error}
\newacronym{lmo}{LMO}{lithium ion manganese oxide}
\newacronym{lo}{LO}{local oscillator}
\newacronym{lora}{LoRa}{long range}
\newacronym{lorawan}{LoRaWAN}{long-range wide-area network}
\newacronym{los}{LoS}{line-of-sight}
\newacronym{lp}{LP}{linear programming}
\newacronym{lpt}{LPT}{laser power transfer}
\newacronym{lpwa}{LPWA}{Low Power Wide Area}
\newacronym{lpwan}{LPWAN}{low-power wide-area network}
\newacronym{lqi}{LQI}{link quality indicator}
\newacronym{lrelu}{LReLU}{leaky rectified linear unit}
\newacronym{lrt}{LRT}{likelihood-ratio test}
\newacronym{ls}{LS}{least squares}
\newacronym{lsa}{LSA}{large synthetic array}
\newacronym{lsf}{LSF}{large-scale fading}
\newacronym{lsfc}{LSFC}{large-scale fading component}
\newacronym{ltc}{LTC}{lithium thionyl chloride}
\newacronym{lte}{LTE}{Long Term Evolution}
\newacronym{lto}{LTO}{lithium titanate}
\newacronym{m2m}{M2M}{machine to machine}
\newacronym{mac}{MAC}{Medium Access Control}
\newacronym{mate}{MATE}{millimeter-wave MIMO testbed}
\newacronym{mcl}{MCL}{Maximum Coupling Loss}
\newacronym{mcs}{MCS}{modulation and coding scheme}
\newacronym{mcu}{MCU}{Microcontroller Unit}
\newacronym{mec}{MEC}{multi-access edge computing}
\newacronym{mems}{MEMS}{micro-electromechanical systems}
\newacronym{mf}{MF}{matched filter}
\newacronym{mimo}{MIMO}{multiple-input multiple-output}
\newacronym{miso}{MISO}{multiple-input single-output}
\newacronym{ml}{ML}{machine learning}
\newacronym{mlp}{MLP}{multilayer perceptron}
\newacronym{mmimo}{mMIMO}{massive MIMO}
\newacronym{mmse}{MMSE}{minimum mean square error}
\newacronym{mmtc}{mMTC}{massive machine-typed communication}
\newacronym{mmwave}{mmWave}{millimeter wave}
\newacronym{mpc}{MPC}{multipath component}
\newacronym{mppt}{MPPT}{maximum power point tracking}
\newacronym{mr}{MR}{maximum ratio}
\newacronym{mrc}{MRC}{maximum ratio combining}
\newacronym{mrc_em}{MRC}{maximum ratio combining}
\newacronym{mrc_EM}{MRC}{Magnetic Resonance Coupling}
\newacronym{mrt}{MRT}{maximum ratio transmission}
\newacronym{mse}{MSE}{mean square error}
\newacronym{mtc}{MTC}{Machine-Type Communication}
\newacronym{multi-rat}{Multi-RAT}{multiple radio access technology}
\newacronym{nb}{NB}{narrowband}
\newacronym{nbiot}{NB-IoT}{narrowband IoT}
\newacronym{nca}{NCA}{nickel cobalt aluminum}
\newacronym{nicd}{NiCd}{nikkel cadmium}
\newacronym{nimh}{NiMH}{nikkel metal hydride}
\newacronym{nlos}{NLoS}{non-line-of-sight}
\newacronym{nmc}{NMC}{nickel manganese cobalt}
\newacronym{nn}{NN}{neural network}
\newacronym{nnls}{NNLS}{non-negative least squares}
\newacronym{noma}{NOMA}{non-orthogonal multiple access}
\newacronym{nr}{NR}{New Radio}
\newacronym{ntp}{NTP}{network time protocol}
\newacronym{oai}{OAI}{OpenAirInterface} % no spelling mistake, it is spelled all glued to eachother...
\newacronym{ofdm}{OFDM}{orthogonal frequency-division multiplexing}
\newacronym{ofdmim}{OFDM-IM}{OFDM with index modulation}
\newacronym{oob}{OOB}{out-of-band}
\newacronym{oran}{O-RAN}{open radio-access network}
\newacronym{os}{OS}{operating system}
\newacronym{ota}{OTA}{over-the-air}
\newacronym{otaa}{OTAA}{over-the-air authentication}
\newacronym{p2p}{P2P}{point-to-point}
\newacronym{pa}{PA}{power amplifier}
\newacronym{pae}{PAE}{power-added efficiency}
\newacronym{papr}{PAPR}{peak-to-average power ratio}
\newacronym{pc}{PC}{pilot count}
\newacronym{pcb}{PCB}{printed circuit board}
\newacronym{pcg}{PCG}{power consumption gain}
\newacronym{pd}{PD}{powered device}
\newacronym{pdcch}{PDCCH}{physical downlink control channel}
\newacronym{pdp}{PDP}{power delay profile}
\newacronym{pdsch}{PDSCH}{physical downlink shared channel}
\newacronym{per}{PER}{packet error rate}
\newacronym{pg}{PG}{path gain}
\newacronym{pgd}{PGD}{proximal gradient descent}
\newacronym{pl}{PL}{path loss}
\newacronym{pll}{PLL}{phase-locked loop}
\newacronym{poe}{PoE}{power-over-Ethernet}
\newacronym{pps}{1PPS}{pulse per second}
\newacronym{prbs}{PRBs}{Physical Resource Blocks}
\newacronym{ps}{PS}{Processing System}
\newacronym{psd}{PSD}{power spectral density}
\newacronym{pse}{PSE}{power sourcing equipment}
\newacronym{psm}{PSM}{power saving mode}
\newacronym{pss}{PSS}{primary synchronisation signal}
\newacronym{ptp}{PTP}{precision-time protocol}
\newacronym{ptrs}{PTRS}{Phase-Tracking Reference Signals}
\newacronym{ptw}{PTW}{paging time window}
\newacronym{pwm}{PWM}{pulse width modulation}
\newacronym{qam}{QAM}{quadrature amplitude modulation}
\newacronym{qos}{QoS}{quality-of-service}
\newacronym{quadriga}{QuaDRiGa}{QUAsi Deterministic RadIo channel GenerAtor}
\newacronym{ra}{RA}{Random Access}
\newacronym{ran}{RAN}{radio access network}
\newacronym{rar}{RAR}{Random Access Response}
\newacronym{rat}{RAT}{radio access technology}
\newacronym{rbs}{RBS}{radio base station}
\newacronym{re}{RE}{radio element}
\newacronym[]{relu}{ReLU}{rectified linear unit}
\newacronym{rf}{RF}{radio frequency}
\newacronym{rfeh}{RFEH}{radio frequency energy harvesting}
\newacronym{rfic}{RFIC}{radio-frequency integrated circuit}
\newacronym{rfid}{RFID}{radio frequency identification}
\newacronym{rfpt}{RFPT}{radio frequency power transfer}
\newacronym{rfsoc}{RFSoC}{Radio Frequency System-on-Chip}
\newacronym{ris}{RIS}{reflective intelligent surface}%reconfigurable?
\newacronym{rllmtc}{RLLMTC}{reliable low latency machine type communication}
\newacronym{rms}{RMS}{root-mean-square}
\newacronym{rmse}{RMSE}{root-mean-square error}
\newacronym{rmt}{RMT}{random matrix theory}
\newacronym{rof}{RoF}{radio-over-fiber}
\newacronym{ros}{ROS}{robot operating system}
\newacronym{rpi}{RPi}{Raspberry Pi}
\newacronym{rrc}{RRC}{Radio Resource Connection}
\newacronym{rreq}{RREQ}{route request packet}
\newacronym{rsrp}{RSRP}{Reference Signals Received Power}
\newacronym{rss}{RSS}{received signal strength}
\newacronym{rssi}{RSSI}{received signal strength indicator}
\newacronym{rtc}{RTC}{real time clock}
\newacronym{rtk}{RTK}{real time kinematics}
\newacronym{rts}{RTS}{ray tracing simulator}
\newacronym{rw}{RW}{RadioWeaves}
\newacronym{rx}{RX}{receiver}
\newacronym{rzf}{RZF}{regularized zero forcing}
\newacronym{sa}{SA}{synchronization anchor}
\newacronym{scfdma}{SCFDMA}{single-carrier frequency division multiple access}
\newacronym{sdg}{SDG}{Sustainable Development Goal}
\newacronym{sdm}{SDM}{sigma-delta modulator}
\newacronym{sdn}{SDN}{software-defined network}
\newacronym{sdof}{SDoF}{sigma-delta over fiber}
\newacronym{sdr}{SDR}{software-defined radio}
\newacronym{sf}{SF}{spreading factor}
\newacronym{sfn}{SFN}{single frequency network}
\newacronym{sfp}{SFP}{small form-factor pluggable}
\newacronym{sinr}{SINR}{signal-to-interference-plus-noise ratio}
\newacronym{siso}{SISO}{single-input single-output}
\newacronym{slam}{SLAM}{simultaneous localization and mapping}
\newacronym{slc}{SLC}{spatial leakage suppression}
\newacronym{smc}{SMC}{specular multipath component}
\newacronym{smps}{SMPS}{switched mode power supply}
\newacronym{sndr}{SNDR}{signal-to-noise-and-distortion ratio}
\newacronym{snidr}{SNIDR}{signal-to-noise-and-interference-and-distortion ratio}
\newacronym{snr}{SNR}{signal-to-noise ratio}
\newacronym{soc}{SoC}{state of charge}
\newacronym{ssb}{SSB}{synchronisation signal block}
\newacronym{ssd}{SSD}{solid state drive}
\newacronym{steam}{STEAM}{science, technology, engineering, the arts, and mathematics}
\newacronym{svd}{SVD}{singular value decomposition}
\newacronym{tau}{TAU}{tracking area update}
\newacronym{tcer}{TCER}{transported to consumed energy ratio}
\newacronym{tdd}{TDD}{time division duplexing}
\newacronym{tdoa}{TDOA}{time-difference-of-arrival}
\newacronym{toa}{TOA}{time-of-arrival}
\newacronym{tof}{ToF}{time-of-flight}
\newacronym{tosm}{TOSM}{through-open-short-match}
\newacronym{trl}{TRL}{technology readyness level}
\newacronym{trp}{TRP}{Transmission Reception Point}
\newacronym{tsn}{TSN}{time-sensitive networking}
\newacronym{ttm}{TTM}{time to market}
\newacronym{ttn}{TTN}{The Things Network}
\newacronym{tx}{TX}{transmitter}
\newacronym{uav}{UAV}{unmanned aerial vehicle}
\newacronym{udp}{UDP}{User Datagram Protocol}
\newacronym{ue}{UE}{user equipment}
\newacronym{ugv}{UGV}{unmanned ground vehicle}
\newacronym{uhd}{UHD}{USRP hardware driver}
\newacronym{uhf}{UHF}{ultra-high frequency}
\newacronym{ul}{UL}{uplink}
\newacronym{ula}{ULA}{uniform linear array}
\newacronym{ummtc}{umMTC}{ultra massive machine type communication}
\newacronym{upa}{UPA}{uniform planar array}
\newacronym{ura}{URA}{uniform rectangular array}
\newacronym{urllc}{URLLC}{ultra-reliable low-latency communications}
\newacronym{usrp}{USRP}{universal software radio peripheral}
\newacronym{uv}{UV}{unmanned vehicle}
\newacronym{uwb}{UWB}{ultrawideband}
\newacronym{vep}{VEP}{virtual edge platform}
\newacronym{vlc}{VLC}{visible light communication}
\newacronym{vlp}{VLP}{visible light positioning}
\newacronym{vna}{VNA}{vector network analyzer}
\newacronym{vr}{VR}{virtual reality}
\newacronym{v2v}{V2V}{vehicle-to-vehicle}
\newacronym{wb}{WB}{wideband}
\newacronym{wpt}{WPT}{wireless power transfer}
\newacronym{wr}{WR}{White Rabbit}
\newacronym{wrsn}{WRSN}{wireless rechargeable sensor network}
\newacronym{wsn}{WSN}{wireless sensor network}
\newacronym{xr}{XR}{extended reality}
\newacronym{z3ro}{Z3RO}{zero third-order distortion}
\newacronym{zf}{ZF}{zero-forcing}
\newacronym{zmcscg}{ZMCSCG}{zero mean circularly symmetric complex Gaussian}
\newacronym{less}{LESS}{Low Energy Scheduler Solution}
\newacronym{wban}{WBAN}{wireless body area network}
\newacronym{un}{UN}{United Nations}
\newacronym{iis}{IIS}{integrated information system}
\pgfplotsset{compat=1.18}
\setlist[enumerate]{itemsep=0mm}
\newcommand{\vect}[1]{\boldsymbol{\mathrm{#1}}}
\newcommand{\mat}[1]{\boldsymbol{\mathrm{#1}}}
\newcommand{\tp}[1]{#1^{\mathrm{T}}}
\newcommand{\hr}[1]{#1^{\mathrm{H}}}
\newcommand{\inv}[1]{\left(#1\right)^{-1}}
\newcommand{\invsq}[1]{\left(#1\right)^{-2}}
\newcommand{\tr}[1]{\mathrm{tr}\,#1}
\newcommand{\diag}[1]{\mathrm{diag}\left(#1\right)}
\newcommand{\blkdiag}[1]{\mathrm{blkdiag}\left(#1\right)}
\newcommand*{\inC}[1]{\in\mathbb{C}^{#1}}
\newcommand*{\inR}[1]{\in\mathbb{R}^{#1}}
\newcommand{\abs}[1]{\left\lvert#1\right\rvert}
\newcommand{\expt}[1]{\mathbb{E}\left\{#1\right\}}
\newcommand{\cn}[2]{\ensuremath{\mathcal{C}\mathcal{N}\left(#1,#2\right)}}
\newcommand{\as}{\xrightarrow{\mathrm{a.s.}}}
\newtheoremstyle{mystyle}%                % Name
  {}%                                     % Space above
  {}%                                     % Space below
  {\itshape}%                             % Body font
  {}%                                     % Indent amount
  {\bfseries}%                            % Theorem head font
  {.}%                                    % Punctuation after theorem head
  { }%                                    % Space after theorem head, ' ', or \newline
  {}%                                     % Theorem head spec (can be left empty, meaning `normal')
\theoremstyle{mystyle}
\newtheorem{theorem}{Theorem}
\newtheorem{proposition}{Proposition}
\title{Energy-Saving Cell-Free Massive MIMO Precoders \\with a per-AP Wideband Kronecker Channel Model}
\name{Emanuele Peschiera$^\star$ \qquad Xavier Mestre$^\dagger$ \qquad Fran\c{c}ois Rottenberg$^\star$
\thanks{This work has been partially funded by a short-term scientific mission (STSM) grant from the COST ACTION CA20120 INTERACT.}}
\address{$^\star$ Katholieke Universiteit Leuven, Ghent, Belgium\\
$^\dagger$ Centre Tecnològic de Telecomunicacions de Catalunya, Barcelona, Spain}
\begin{document}
\ninept
\maketitle
\begin{abstract}
We study cell-free massive multiple-input multiple-output precoders that minimize the power consumed by the power amplifiers subject to per-user per-subcarrier rate constraints. The power at each antenna is generally retrieved by solving a fixed-point equation that depends on the instantaneous channel coefficients. Using random matrix theory, we retrieve each antenna power as the solution to a fixed-point equation that depends only on the second-order statistics of the channel. Numerical simulations prove the accuracy of our asymptotic approximation and show how a subset of access points should be turned off to save power consumption, while all the antennas of the active access points are utilized with uniform power across them. This mechanism allows to save consumed power up to a factor of $9\times$ in low-load scenarios.
\end{abstract}
\begin{keywords}
Cell-Free Massive MIMO, Precoding, Random Matrix Theory, Power Amplifiers, Power Consumption.
\end{keywords}
\section{Introduction}

Energy-saving mechanisms are becoming increasingly present in wireless systems~\cite{Zhang20,Islam23,Wesemann23}. In the space domain, energy-saving strategies dynamically activate hardware resources, such as antennas and their \gls{rf} chains, or even entire \glspl{ap}, as a function of the network load. Studies on cellular massive \gls{mimo} reveal that the traffic-aware (de)activation of hardware resources is a direct consequence of considering the consumed power instead of the transmit power, as conventionally done in the precoding design~\cite{Cheng19,Senel19,Peschiera23}. From the perspective of the \glspl{pa} consumption, in a single-carrier system it is more convenient to saturate a subset of \glspl{pa}, which can work at maximal efficiency, and switch off the remaining ones~\cite{Cheng19}. Considering the \gls{bs} consumption, significant savings can be achieved by turning off some antennas and their associated \gls{rf} chains~\cite{Islam23}.

In this work, we study the optimal precoding and power allocation, in terms of \glspl{pa} consumption, for a \gls{cfmmimo} network, where several distributed \glspl{ap} jointly serve the users~\cite{Nayebi15}. This type of MIMO architecture offers substantial advantages towards \gls{6g} systems thanks to its high network coverage and macro-diversity gain~\cite{Wang23}. Studies in the literature have addressed the problem of switching on/off \glspl{ap} to minimize consumption~\cite{VanChien20,He23,Kooshki23}, but in most cases, they rely on conventional precoders and optimize the power allocations according to the power consumption. Moreover, the impact of a multicarrier transmission is often overlooked. In this work, we formulate the precoding problem as a minimization of the power consumed by the \glspl{pa} under per-subcarrier \gls{zf} constraints. In the case of centralized processing, the problem is equivalent to the cellular one in~\cite{Peschiera23}, with the crucial difference being the channel model, which incorporates the characteristics of a \gls{cfmmimo} scenario. The spatial distribution of the \glspl{ap} implies a channel having a different covariance matrix for each \gls{ap}.

Despite its advantages, the large dimension of a complete \gls{cfmmimo} observation can easily lead to significant growth in the power consumed by (i) the fronthaul links, (ii) the baseband processing at the \gls{cu}, and (iii) the \gls{rf} circuitry and power supply of all the \glspl{ap}. For this reason, leveraging \gls{rmt} and the structure of the precoding solution, we derive simplified solutions to the addressed problem. The \gls{cu} determines which \glspl{ap} to activate prior to computing the precoding matrices, and the set of active \glspl{ap} changes only with the second-order channel statistics. It can be shown that the proposed solution approaches the optimal one when the number of subcarriers increases (referred to as wideband regime since we consider a fixed subcarrier spacing).

\section{System Model}

\subsection{Signal Model}
We consider the downlink of an \gls{ofdm} \gls{cfmmimo} system with $L$ \glspl{ap}, where \gls{ap} $l$ is equipped with $M_l$ antennas, $K$ single-antenna users and $Q$ subcarriers. We consider fully-centralized processing, where all \glspl{ap} are connected to a \gls{cu}. Moreover, all the active \glspl{ap} serve all the users. The total number of antennas in the system is $\sum_{l=0}^{L-1}M_l = N$. We group the signals relative to an \gls{ap} in a contiguous manner. After precoding, the transmitted signal by all \glspl{ap} at subcarrier $q$ is
\begin{equation}
    \vect{x}_q = \tp{\left[\tp{\vect{x}_{0,q}},\dotsc,\tp{\vect{x}_{L-1,q}}\right]} = \mat{W}_q\vect{s}_q\inC{N\times 1}
\end{equation}
where $\vect{x}_{l,q}\inC{M_l\times 1}$ is the transmitted signal by \gls{ap} $l$ at subcarrier $q$, $\vect{s}_q\inC{K\times 1}$ is the symbol vector at subcarrier $q$, and $\mat{W}_q\inC{N\times K}$ is the aggregated precoding matrix corresponding to all the \glspl{ap} at subcarrier $q$. We assume that $\expt{\vect{s}_q\hr{\vect{s}}_q}=\mat{I}_K \, \forall q$. The received signal vector at subcarrier $q$ is
\begin{equation}
    \vect{r}_q = \mat{H}_q\vect{x}_q+\vect{\nu}_q\inC{K\times 1}
\end{equation}
where $\mat{H}_q=\left[\mat{H}_{0,q},\dotsc,\mat{H}_{L-1,q}\right]\inC{K\times N}$ is the aggregated channel matrix at subcarrier $q$, with $\mat{H}_{l,q}\inC{K\times M_l}$ being the channel matrix of \gls{ap} $l$ at subcarrier $q$, and $\vect{\nu}_q\sim\cn{\vect{0}_K}{\sigma_{\nu}^2\mat{I}_K}$ is the additive noise.

\subsection{Channel Model}
We model the channel of the $l$th \gls{ap} as 
\begin{equation}
\label{eq:channel}
    \mat{H}_{l,q} = \mat{D}_{\vect{\beta},l}^{\frac{1}{2}}\mat{G}_{l,q}\mat{C}_{\mathrm{AP},l}^{\frac{1}{2}}
\end{equation}
where $\mat{D}_{\vect{\beta},l}=\diag{\beta_{0,l},\dotsc,\beta_{K-1,l}}$ is a diagonal matrix collecting the \gls{lsf} coefficients for AP $l$, $\beta_{k,l}>0$ is the \gls{lsf} coefficient between user $k$ and \gls{ap} $l$, $\mat{G}_{l,q}$ models the small-scale fading and has \gls{iid} entries drawn from a $\cn{0}{1}$ distribution, and $\mat{C}_{\mathrm{AP},l}\inC{M_l\times M_l}$ expresses the correlation among the antennas of AP $l$. The above is equivalent to considering a per-\gls{ap} Kronecker channel model, while $\mat{H}_q$ follows a Weichselberger channel model~\cite{Weichselberger06}.
\begin{figure}[!ht]
\centering
\includegraphics[width=.85\linewidth]{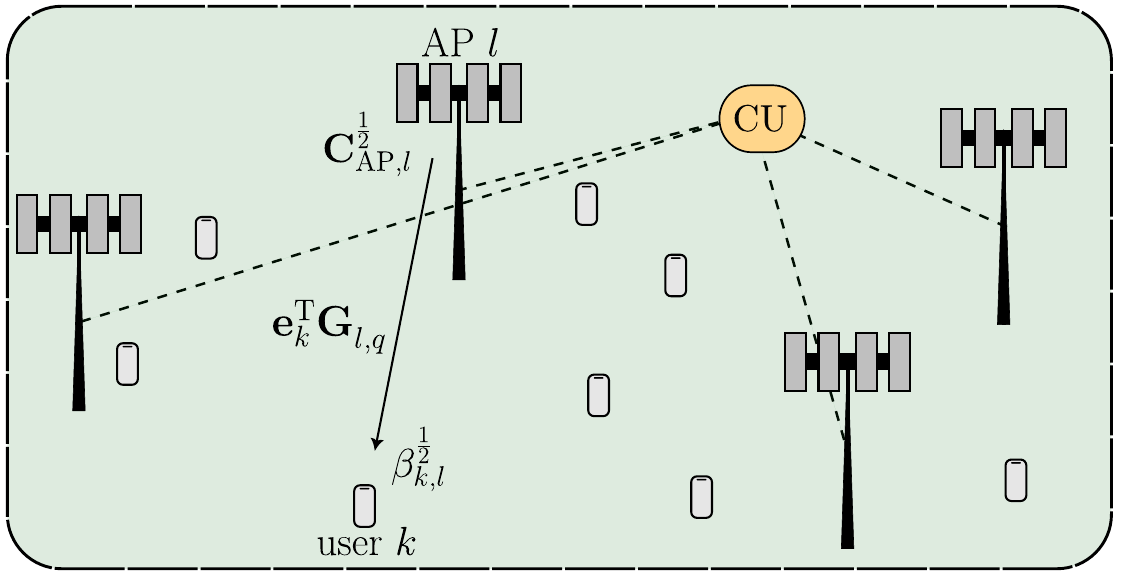}
\caption{Considered \gls{cfmmimo} scenario.}
\end{figure}

\subsection{Power Consumption Model}
Given that we consider the transmitted symbols to be of unit variance and uncorrelated among subcarriers and users, the transmit power at antenna $n$ is given by
$p_n = \sum_{q=0}^{Q-1}\tp{\vect{e}_n}\mat{W}_q\hr{\mat{W}_q}\vect{e}_n$,
where $\vect{e}_n$ is the $n$th canonical basis vector of $\mathbb{R}^N$, thus the total transmit power is
\begin{equation}
    P_\mathrm{tx} = \sum_{q=0}^{Q-1}\tr{\mat{W}_q\hr{\mat{W}_q}}
\end{equation}
where $\tr{\cdot}$ is the trace operator. Under the \gls{pa} consumption model in~\cite{Cheng19,Peschiera23}, the total power consumed by the \glspl{pa} is
\begin{equation}
\label{eq:pPAs}
    P_\mathrm{PAs} = \frac{p_\mathrm{max}^{\frac{1}{2}}}{\eta_\mathrm{max}}\sum_{n=0}^{N-1}\left(\sum_{q=0}^{Q-1}\tp{\vect{e}_n}\mat{W}_q\hr{\mat{W}_q}\vect{e}_n\right)^{\frac{1}{2}}
\end{equation}
where $p_\mathrm{max}$ and $\eta_\mathrm{max}$ are the maximal \gls{pa} transmit power and efficiency, respectively. The solutions that optimize~(\ref{eq:pPAs}) typically activate only part of the $L$ \glspl{ap}. Let us therefore identify the set of the active \glspl{ap} of the optimum solution as $\mathcal{L}_\mathrm{a}\subseteq\{0,\dotsc,L-1\}$, where the number of active \glspl{ap} is $\left|\mathcal{L}_\mathrm{a}\right|=L_\mathrm{a}$. To quantify the consumed power at the network level, we use the following model
\begin{equation}
    P_\mathrm{net} = P_\mathrm{PAs}+\sum_{l\in\mathcal{L}_\mathrm{a}}\left(P_\mathrm{fix}+P_\mathrm{c}M_{\mathrm{a},l}\right)
\end{equation}
where $P_\mathrm{fix}$ is the fixed consumption associated with the activation of an entire \gls{ap}, $P_\mathrm{c}$ is the cost (in terms of circuit power consumption) to activate a single antenna, and $M_{\mathrm{a},l}$ is the number of active antennas at \gls{ap} $l$.

\section{Precoder Design}

As a benchmark, we consider the conventional \gls{zf} precoder, which solves
\begin{equation}
\label{}
(\mathcal{P}_1):\quad
     \min_{\{\mat{W}_q\}}\;\;       P_\mathrm{tx}\quad
     \mathrm{s.t.}\;\; 		       \mat{H}_q\mat{W}_q = \tilde{\mat{D}}_{\vect{\gamma}}^{\frac{1}{2}}\sigma_\nu \;\; \forall q
\end{equation}
where $\tilde{\mat{D}}_{\vect{\gamma}}=\frac{1}{Q}\diag{\gamma_0,\dotsc,\gamma_{K-1}}$, $\gamma_k$ being the target \gls{snr} of user $k$, and $\sigma_\nu^2$ is the noise power. The \gls{zf} constraints can be easily translated into rate constraints, e.g., an achievable rate for user $k$ at subcarrier $q$ is $\log_2(1+\frac{\gamma_k}{Q})$ bits per channel use. The solution to~$(\mathcal{P}_1)$ is the per-subcarrier \gls{zf}
\begin{equation}
\label{eq:W_con}
\mat{W}_q = \hr{\mat{H}_q}\inv{\mat{H}_q\hr{\mat{H}}_q}\tilde{\mat{D}}_{\vect{\gamma}}^{\frac{1}{2}}\sigma_\nu.
\end{equation}
We are instead interested in the precoder that minimizes the power consumed by the \glspl{pa} under a \gls{zf} constraint, corresponding to
\begin{equation}
\label{}
(\mathcal{P}_2):\quad
     \min_{\{\mat{W}_q\}}\;\;  	 	P_\mathrm{PAs}\quad
     \mathrm{s.t.}\;\; 		     	\mat{H}_q\mat{W}_q = \tilde{\mat{D}}_{\vect{\gamma}}^{\frac{1}{2}}\sigma_\nu \;\; \forall q.
\end{equation}
Let us define $\tilde{\mat{H}}_q = \sigma_\nu^{-1}\tilde{\mat{D}}_{\vect{\gamma}}^{-\frac{1}{2}}\mat{H}_q$ as the normalized channel matrix.
\begin{proposition}
The solution to~$(\mathcal{P}_2)$ has the following form~\cite{Peschiera23}
\begin{equation}
\label{eq:W_opt}
    \mat{W}_q = \mat{D}_{\vect{p}}^{\frac{1}{2}}\hr{\mat{H}_q}\inv{\mat{H}_q\mat{D}_{\vect{p}}^{\frac{1}{2}}\hr{\mat{H}_q}}\tilde{\mat{D}}_{\vect{\gamma}}^{\frac{1}{2}}\sigma_\nu
\end{equation}
where $\mat{D}_{\vect{p}} = \diag{p_0,\dotsc,p_{N-1}}$ contains the powers transmitted by the antennas. The transmit power at antenna $n$ can be written as the solution to the set of equations (for $n=1,\dotsc,N$)
\begin{equation}
\label{eq:p_n}
    p_n = \sum_{q=0}^{Q-1}\tp{\vect{e}_n}\mat{D}_{\vect{p}}^{\frac{1}{2}}\hr{\tilde{\mat{H}}_q}\invsq{\tilde{\mat{H}}_q\mat{D}_{\vect{p}}^{\frac{1}{2}}\hr{\tilde{\mat{H}}_q}}\tilde{\mat{H}}_q\mat{D}_{\vect{p}}^{\frac{1}{2}}\vect{e}_n
\end{equation}
which can be retrieved via iterative fixed-point algorithm. 
\end{proposition}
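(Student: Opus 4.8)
\emph{Proof idea.} The starting point is that $(\mathcal{P}_2)$ is convex. Stacking, for every antenna $n$, the $n$-th rows of $\mat{W}_0,\dots,\mat{W}_{Q-1}$ into a single vector $\vect{w}_n\inC{QK}$, the per-antenna power is $p_n=\norm{\vect{w}_n}^2$, so the cost~(\ref{eq:pPAs}) equals $\frac{p_\mathrm{max}^{1/2}}{\eta_\mathrm{max}}\sum_n\norm{\vect{w}_n}$, a sum of Euclidean norms and hence convex, while the constraints are affine and the feasible set is non-empty (it contains the per-subcarrier \gls{zf} precoder~(\ref{eq:W_con})). Thus the (KKT) stationarity conditions characterize the optimum, and it suffices to produce a point of the announced form.

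The plan is to expose that structure through a variational (epigraph) reformulation. Applying the elementary identity $\sqrt{\pi}=\min_{t>0}\tfrac12\bigl(\pi\,t^{-1/2}+t^{1/2}\bigr)$, whose minimizer is $t=\pi$, to each $\pi=\norm{\vect{w}_n}^2$ recasts $(\mathcal{P}_2)$ as the joint minimization, over feasible $\{\mat{W}_q\}$ and over $\vect{t}\inR{N}$ with $t_n>0$, of $\frac{p_\mathrm{max}^{1/2}}{\eta_\mathrm{max}}\sum_n\tfrac12\bigl(\norm{\vect{w}_n}^2 t_n^{-1/2}+t_n^{1/2}\bigr)$. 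With $\vect{t}$ held fixed, the inner problem decouples over subcarriers into equality-constrained quadratics $\min_{\mat{W}_q}\tr{\hr{\mat{W}_q}\mat{D}_{\vect{t}}^{-\frac{1}{2}}\mat{W}_q}$ subject to $\mat{H}_q\mat{W}_q=\tilde{\mat{D}}_{\vect{\gamma}}^{\frac{1}{2}}\sigma_\nu$, each solved by a routine Lagrange-multiplier argument as $\mat{W}_q=\mat{D}_{\vect{t}}^{\frac{1}{2}}\hr{\mat{H}_q}\inv{\mat{H}_q\mat{D}_{\vect{t}}^{\frac{1}{2}}\hr{\mat{H}_q}}\tilde{\mat{D}}_{\vect{\gamma}}^{\frac{1}{2}}\sigma_\nu$, which is exactly~(\ref{eq:W_opt}) with $\vect{t}$ in the role of $\vect{p}$. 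With $\{\mat{W}_q\}$ held fixed instead, the minimizing $t_n$ equals $\norm{\vect{w}_n}^2$. Hence at the joint minimum both conditions hold: $\mat{W}_q$ has the form~(\ref{eq:W_opt}) and its diagonal weight is self-consistent with the powers that same precoder transmits, $p_n=\sum_q\tp{\vect{e}_n}\mat{W}_q\hr{\mat{W}_q}\vect{e}_n$. (Equivalently, one can bypass the reformulation and write the subgradient optimality condition of the sum-of-norms cost directly: for every active antenna it forces $\tp{\vect{e}_n}\mat{W}_q$ to be proportional to $\norm{\vect{w}_n}\tp{\vect{e}_n}\hr{\mat{H}_q}\mat{\Lambda}_q$ for a common multiplier $\mat{\Lambda}_q$, and re-imposing $\mat{H}_q\mat{W}_q=\tilde{\mat{D}}_{\vect{\gamma}}^{\frac{1}{2}}\sigma_\nu$ recovers~(\ref{eq:W_opt}).)

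It then remains to rewrite the self-consistency relation as~(\ref{eq:p_n}). Substituting $\mat{H}_q=\sigma_\nu\tilde{\mat{D}}_{\vect{\gamma}}^{\frac{1}{2}}\tilde{\mat{H}}_q$ into~(\ref{eq:W_opt}) makes the $\sigma_\nu$ and $\tilde{\mat{D}}_{\vect{\gamma}}$ factors cancel, leaving $\mat{W}_q=\mat{D}_{\vect{p}}^{\frac{1}{2}}\hr{\tilde{\mat{H}}_q}\inv{\tilde{\mat{H}}_q\mat{D}_{\vect{p}}^{\frac{1}{2}}\hr{\tilde{\mat{H}}_q}}$; since $\tilde{\mat{H}}_q\mat{D}_{\vect{p}}^{\frac{1}{2}}\hr{\tilde{\mat{H}}_q}$ is Hermitian, $\mat{W}_q\hr{\mat{W}_q}=\mat{D}_{\vect{p}}^{\frac{1}{2}}\hr{\tilde{\mat{H}}_q}\invsq{\tilde{\mat{H}}_q\mat{D}_{\vect{p}}^{\frac{1}{2}}\hr{\tilde{\mat{H}}_q}}\tilde{\mat{H}}_q\mat{D}_{\vect{p}}^{\frac{1}{2}}$, and summing its $(n,n)$ entry over $q$ is precisely the right-hand side of~(\ref{eq:p_n}). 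That the resulting system has a solution reachable by the iteration $p_n\leftarrow(\text{that right-hand side})$ follows because the right-hand side, viewed as a function of $\vect{p}$, is a standard interference function (positive, monotone, scalable) on the active support. I expect the one genuinely delicate point to be the treatment of turned-off antennas: the variational $\min_{t_n>0}$ ranges over an open set, so an antenna that should be inactive only surfaces as $t_n\to0^{+}$, and one must argue that~(\ref{eq:W_opt}) and~(\ref{eq:p_n}) are to be read on the active support $\mathcal{L}_\mathrm{a}$, where $\mat{H}_q$ retains full row rank; the matrix algebra itself is routine.
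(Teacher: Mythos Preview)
The paper does not actually prove this proposition: it simply cites~\cite{Peschiera23} and moves on. There is therefore no ``paper's own proof'' to compare against; your proposal supplies an argument where the authors chose to outsource it.

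Your approach is correct and standard for sum-of-norms objectives. The variational identity $\sqrt{\pi}=\min_{t>0}\tfrac12(\pi t^{-1/2}+t^{1/2})$ is the usual majorization that turns a group-lasso cost into an iteratively reweighted least-squares scheme; the inner weighted \gls{zf} step is exactly the generalized pseudo-inverse you wrote, and the outer update $t_n\leftarrow\norm{\vect{w}_n}^2$ closes the loop to give the fixed-point system~(\ref{eq:p_n}). The substitution $\mat{H}_q=\sigma_\nu\tilde{\mat{D}}_{\vect{\gamma}}^{1/2}\tilde{\mat{H}}_q$ and the resulting cancellation are correct. Your caveat about inactive antennas is the right one to flag: the $\min_{t_n>0}$ is over an open set, so a zero-power antenna is a boundary solution and~(\ref{eq:W_opt})--(\ref{eq:p_n}) should be read on the active support, with the inactive rows set to zero by complementary slackness (equivalently, by the subdifferential of the norm at the origin). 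The standard-interference-function claim for convergence of the fixed-point iteration is plausible but would need a line of justification if you want the proof to be self-contained; the paper, relying on~\cite{Peschiera23}, does not address it either.
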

\noindent The structure of the precoder in~(\ref{eq:W_opt}) allows one to allocate more power to some antennas, which can work closer to the \gls{pa} saturation level. Turning off some antennas/\glspl{ap} implies further savings in $P_\mathrm{net}$. Note that the \gls{cu} needs to (i) compute the antenna powers in~(\ref{eq:p_n}) every time the small-scale fading changes (i.e., every channel coherence time) and (ii) know the instantaneous channel coefficients of all the \glspl{ap}, which corresponds to $NKQ$ elements. These operations impose challenging requirements to the fronthaul capacity of a \gls{cfmmimo} system due to the large number of antennas/\glspl{ap}. For this reason we propose, in the next section, a \gls{rmt} approach that reduces the complexity of retrieving the \glspl{ap} powers.

\section{Asymptotic Analysis}

In this section, we consider the following asymptotic regime.

$\mathbf{(As1)}$: $M_l\to\infty\;\forall l\in\{0,\dotsc,L-1\}$ and $K\to\infty$. Moreover, defining $c_l=\frac{K}{M_l}$, there exist $b>a>0$ for which $a<\min_l\liminf_Kc_l<\max_l\limsup_Kc_l<b$.

Thanks to the hardening effect that happens in this regime, the random quantities in~(\ref{eq:p_n}) behave as deterministic ones~\cite{RMTbook}. Nonetheless, the solution to the system of equations in~(\ref{eq:p_n}) has  $N$ components, therefore $\mathbf{(As1)}$ implies that the dimension of the solution grows infinitely large. The number of \glspl{ap} $L$ is instead fixed. This motivates the introduction of the following assumption.

$\mathbf{(As2)}$: $\mat{D}_{\vect{p}}=\blkdiag{\left\{p_l\mat{I}_{M_l}\right\}_{l=0}^{L-1}}$, where $\blkdiag{\cdot}$ is the block-diagonal operator, corresponding to a uniform power allocation per \gls{ap}. Therefore, the effective dimension of the solution is fixed and equal to $L$.

The above assumption is an approximation, indeed one can retrieve the exact solution to~(\ref{eq:p_n}) and observe that more power is allocated to the least correlated antennas. However, numerical simulations will demonstrate that the results obtained under $\mathbf{(As2)}$ have a negligible performance loss compared to the optimal ones. The following assumptions are also made.

$\mathbf{(As3)}$: $\forall l\in\{0,\dotsc,L-1\}$, $\mat{G}_{l,q}\inC{K\times M_l}$ has \gls{iid}~entries with zero mean and unit variance.

$\mathbf{(As4)}$: $\forall l\in\{0,\dotsc,L-1\}$, $\mat{D}_{l}\inR{K\times K}$, $\mat{C}_{\mathrm{AP},l}\inC{M_l\times M_l}$ and $\mat{D}_{\vect{p}}\inR{N\times N}$ are deterministic matrices with bounded spectral norms. Moreover, $\mat{D}_{\vect{p}}$ has at least $K+1$ non-zero entries.

While in the previous section we did not rely on any specific channel model, we now consider the per-\gls{ap} Kronecker channel in~(\ref{eq:channel}). By defining $\mat{D}_l=\sigma_\nu^{-2}\tilde{\mat{D}}_{\vect{\gamma}}^{-1}\mat{D}_{\vect{\beta},l}$, we can write
\begin{equation}
\label{eq:block-Kronecker}
    \tilde{\mat{H}}_q\mat{D}_{\vect{p}}^{\frac{1}{2}}\hr{\tilde{\mat{H}}_q} = \sum_{l=0}^{L-1}p_l^\frac{1}{4}\mat{D}_{l}^{\frac{1}{2}}\mat{G}_{q,l}\mat{C}_{\mathrm{AP},l}\hr{\mat{G}_{q,l}}\mat{D}_{l}^{\frac{1}{2}}p_l^\frac{1}{4}.
\end{equation}
The above model is analyzed by the authors in~\cite{Couillet11}. In our case, we want to characterize $\tp{\vect{e}_n}\mat{D}_{\vect{p}}^{\frac{1}{2}}\hr{\tilde{\mat{H}}_q}\invsq{\tilde{\mat{H}}_q\mat{D}_{\vect{p}}^{\frac{1}{2}}\hr{\tilde{\mat{H}}_q}}\tilde{\mat{H}}_q\mat{D}_{\vect{p}}^{\frac{1}{2}}\vect{e}_n$, which corresponds to the left-hand side of~(\ref{eq:p_n}). It turns out that this quantity is asymptotically equivalent to a deterministic one (almost surely) as the system dimensions increase to infinity.
Let us now introduce the variables $b_l$, $l\in\{0,\dotsc,L-1\}$, which form the unique non-negative solutions to the $L$ equations
\begin{equation}
    b_l = \frac{1}{K}\tr{p_l^\frac{1}{2}\mat{D}_{l}\left(\sum_{l'=0}^{L-1}\frac{1}{M_{l'}}\sum_{m=0}^{M_{l'}-1}\frac{\xi_{l',m}}{1+c_{l'}\xi_{l',m}b_{l'}}p_{l'}^\frac{1}{2}\mat{D}_{l'}\right)^{-1}}
\end{equation}
where $\xi_{l,m}$ is the $m$th eigenvalue of $\mat{C}_{\mathrm{AP},l}$.
The variables $\dot{b}_l$, $l\in\{0,\dotsc,L-1\}$, are defined as
\begin{equation}
\begin{bmatrix}
\dot{b}_0\\
\vdots\\
\dot{b}_{L-1}
\end{bmatrix}
=
\inv{
\mat{I}_L-
\mat{A}
}
\begin{bmatrix}
\frac{1}{K}\tr{p_0^\frac{1}{2}\mat{D}_{0}\mat{B}^{-2}}\\
\vdots\\
\frac{1}{K}\tr{p_{L-1}^\frac{1}{2}\mat{D}_{L-1}\mat{B}^{-2}}
\end{bmatrix}
\end{equation}
with $\mat{A}\inR{L\times L}$, $\left[\mat{A}\right]_{l,l'} = \frac{1}{K}p_{l}^\frac{1}{2}\tr{\mat{D}_{l}\mat{B}^{-2}p_{l'}^\frac{1}{2}\mat{D}_{l'}^\prime}$,
$\mat{B} = \sum_{l=0}^{L-1}\frac{1}{M_l}\\\sum_{m=0}^{M_l-1}\frac{\xi_{l,m}}{1+c_l\xi_{l,m}b_l}p_l^\frac{1}{2}\mat{D}_{l}$, and
$\mat{D}_{l}' = \frac{1}{M_l}\sum_{m=0}^{M_l-1}\frac{c_l\xi_{l,m}^2}{\left(1+c_l\xi_{l,m}b_l\right)^2}\mat{D}_{l}$. 
We now have all the ingredients to express the deterministic equivalent of the individual transmit powers at the \glspl{ap}.
\begin{theorem}
\label{th1}
    Under $(\mathbf{As1})-(\mathbf{As4})$, for a fixed $L$ and considering the channel model in~(\ref{eq:channel}), we have 
    $\left(p_l - \overline{p_l}\right) \as 0$, where 
    \begin{equation}
    \label{eq:pbar}
    \overline{p_l} = \frac{1}{M_l^2}\tr{\mat{D}_{\mathrm{sel},l}\mat{D}_{\vect{p}}^{\frac{1}{2}}\mat{V}}
    \end{equation}
    \begin{equation}
        \mat{V} = \blkdiag{\left\{\invsq{\mat{I}_{M_l}+c_lb_l\mat{C}_{\mathrm{AP},l}}c_l\dot{b}_l\mat{C}_{\mathrm{AP},l}\right\}_{l=0}^{L-1}}
    \end{equation}
    and $\mat{D}_{\mathrm{sel},l} = \blkdiag{\left\{\delta_{l,l'}\mat{I}_{M_{l'}}\right\}_{l'=0}^{L-1}}$, where $\delta_{l,l'}$ is the Kronecker delta function, selects the antenna indexes of the \gls{ap} $l$.
\end{theorem}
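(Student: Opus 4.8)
\noindent\emph{Proof sketch.} The plan is to reduce the $N$ scalar equations in~(\ref{eq:p_n}) to an $L$-dimensional system by means of $(\mathbf{As2})$, to build a deterministic equivalent for the functional that appears in that system from the correlated multiple-access channel model analysed in~\cite{Couillet11}, and to transfer the almost-sure convergence from the defining map to its fixed point. For the reduction, note that under $(\mathbf{As2})$ one has $\mat{D}_{\vect{p}}^{\frac{1}{2}}\vect{e}_n=p_l^{\frac{1}{2}}\vect{e}_n$ for every antenna $n$ of AP $l$; summing~(\ref{eq:p_n}) over the $M_l$ antennas of AP $l$ and dividing by $M_l$, and using $\sum_{n\in\mathrm{AP}_l}\vect{e}_n\tp{\vect{e}_n}=\mat{D}_{\mathrm{sel},l}$, the cyclicity of the trace and $\tilde{\mat{H}}_q\mat{D}_{\mathrm{sel},l}\hr{\tilde{\mat{H}}_q}=\tilde{\mat{H}}_{l,q}\hr{\tilde{\mat{H}}_{l,q}}$, gives
\begin{equation}
p_l=\frac{1}{M_l}\sum_{q=0}^{Q-1}\tr{\mat{D}_{\mathrm{sel},l}\mat{D}_{\vect{p}}^{\frac{1}{2}}\hr{\tilde{\mat{H}}_q}\invsq{\mat{S}_q}\tilde{\mat{H}}_q\mat{D}_{\vect{p}}^{\frac{1}{2}}},\qquad l=0,\dotsc,L-1,
\end{equation}
with $\mat{S}_q:=\tilde{\mat{H}}_q\mat{D}_{\vect{p}}^{\frac{1}{2}}\hr{\tilde{\mat{H}}_q}$; this is an $L$-dimensional fixed point because, by~(\ref{eq:block-Kronecker}), its right-hand side depends on the antenna powers only through $p_0,\dotsc,p_{L-1}$.

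Second, fix a deterministic block profile $\{p_l\}$. The summand equals $p_l\sum_{m}\hr{\vect{h}_{l,m}}\invsq{\mat{S}_q}\vect{h}_{l,m}$, where $\vect{h}_{l,m}$ is the $m$th column of $\tilde{\mat{H}}_{l,q}$, and $\mat{S}_q$ is exactly a sum of mutually independent doubly-correlated Gram matrices of the type in~\cite{Couillet11}. I would write $\invsq{\mat{S}_q}=\left.\partial_z\inv{\mat{S}_q-z\mat{I}_K}\right|_{z=0}$ and first apply the deterministic-equivalent theorem of~\cite{Couillet11}: it yields, for each AP $l$, a scalar $b_l(z)$ solving a fixed-point system whose value at $z=0$ is the stated one for $b_l$, and it makes bilinear forms on deterministic vectors (in particular $\hr{\tilde{\mat{H}}_{l,q}}\inv{\mat{S}_q-z\mat{I}_K}\tilde{\mat{H}}_{l,q}$) asymptotically deterministic almost surely. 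Since the columns $\vect{h}_{l,m}$ are the very vectors that build $\mat{S}_q$, I would peel off the rank-one contribution of each $\vect{h}_{l,m}$ from $\mat{S}_q$ and invoke the matrix-inversion lemma, which produces the scalar corrections $\inv{1+c_l\xi_{l,m}b_l}$; summing over the $M_l$ columns replaces them by a trace against $\inv{\mat{I}_{M_l}+c_lb_l\mat{C}_{\mathrm{AP},l}}$. Differentiating the whole construction at $z=0$ then turns $\inv{\mat{S}_q-z\mat{I}_K}$ into $\invsq{\mat{S}_q}$, differentiates the $b_l$-system into the linear system that defines $\dot b_l$ (with $\mat{A}$ the Jacobian of that map), and replaces each $\inv{1+c_l\xi_{l,m}b_l}$ by $\invsq{1+c_l\xi_{l,m}b_l}$ times a $\dot b_l$-term; collecting the per-AP contributions into the block matrix $\mat{V}$ shows that $\tr{\mat{D}_{\mathrm{sel},l}\mat{D}_{\vect{p}}^{\frac{1}{2}}\hr{\tilde{\mat{H}}_q}\invsq{\mat{S}_q}\tilde{\mat{H}}_q\mat{D}_{\vect{p}}^{\frac{1}{2}}}$ converges almost surely, uniformly on compact sets of profiles and (since the $\mat{G}_{q,l}$ are i.i.d.\ across subcarriers) to the same deterministic value for every $q$, so that carrying over the factor $\tfrac{1}{M_l}$ and the sum over the $Q$ subcarriers identifies this limit with $\overline{p_l}$ in~(\ref{eq:pbar}).

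Third, the random powers $\{p_l\}$ and the deterministic powers $\{\overline{p_l}\}$ are the fixed points of two maps that the previous step shows to be almost surely uniformly close on a compact set containing both, while $(\mathbf{As4})$ and the boundedness of the spectra make the reduced map Lipschitz there; a standard fixed-point perturbation argument, together with the uniqueness of the non-negative solution of the deterministic system (obtained by the usual monotonicity argument for systems of this form), then gives $p_l-\overline{p_l}\as 0$.

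The hard part is the second step, and within it the deterministic equivalent of $\sum_{m}\hr{\vect{h}_{l,m}}\invsq{\mat{S}_q}\vect{h}_{l,m}$: the vectors $\vect{h}_{l,m}$ are strongly correlated with $\mat{S}_q$, so no direct substitution applies, and the second resolvent power is considerably more delicate than the first near $z=0$, requiring a uniform lower bound on the smallest eigenvalue of $\mat{S}_q$. This is where the requirement in $(\mathbf{As4})$ that $\mat{D}_{\vect{p}}$ have more than $K$ non-zero entries, together with the ratio bounds in $(\mathbf{As1})$, enters: it guarantees that $\mat{S}_q$ is almost surely invertible with uniformly bounded inverse, and hence that the exchange of the $z$-derivative with the deterministic-equivalent limit is legitimate.
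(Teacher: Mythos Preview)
Your proposal is correct and follows exactly the route the paper indicates: the paper's own proof is simply ``Based on~\cite[Th.~1]{Couillet11}, it is omitted due to lack of space,'' and your three steps---averaging~(\ref{eq:p_n}) over the antennas of each AP via $(\mathbf{As2})$ to get an $L$-dimensional system, invoking the deterministic equivalent of~\cite{Couillet11} together with the resolvent-derivative device $\mat{S}_q^{-2}=\partial_z(\mat{S}_q-z\mat{I}_K)^{-1}\big|_{z=0}$ to handle the squared inverse, and transferring almost-sure convergence from the map to its fixed point---are precisely the natural way to carry out that citation. You in fact supply more structure than the paper does; the only bookkeeping to watch is that the extra $1/M_l$ in~(\ref{eq:pbar}) relative to your reduced equation arises because it is the \emph{normalized} trace $\tfrac{1}{M_l}\tr{\cdot}$ that admits a deterministic equivalent in the large-system limit.
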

\begin{proof}
    Based on~\cite[Th.~1]{Couillet11}, it is omitted due to lack of space.
\end{proof}

Eq.~(\ref{eq:pbar}) is a fixed-point equation in the \glspl{ap} powers and can be solved by using standard fixed-point methods (e.g.,~\cite[Algorithm 1]{Peschiera23}). The variables $\{b_l\}$ can also be computed via fixed-point algorithm, while $\{\dot{b}_l\}$ are computed in closed form as a function of $\{b_l\}$. Note that the computation of $\left\{\overline{p_l}\right\}$ requires knowing only the second-order channel statistics (i.e., \gls{lsf} and correlation coefficients), which are much more stable over time.
The \gls{cu}, in our proposed strategy, performs the following steps:
\begin{enumerate}
    \item Receive the updated \gls{lsf} coefficients from all the $L$ \glspl{ap}. 
    \item Compute $\{\overline{p_l}\}$ by solving~(\ref{eq:pbar}) $\forall l\in\{0,\dotsc,L-1\}$.
    \item Select the active \glspl{ap}, $\mathcal{L}_\mathrm{a}^\star=\{l\in\{0,\dotsc,L-1\} \mid \overline{p_l}>\epsilon\}$.\footnote{The value of $\epsilon$ is related to the performance of the fixed-point algorithm and can be set empirically.}
    \item For the whole time interval where the \gls{lsf} remains constant, receive from the \glspl{ap} in $\mathcal{L}_\mathrm{a}^\star$ the updated instantaneous channel coefficients, whose number is reduced to $\sum_{l\in\mathcal{L}_\mathrm{a}^\star}M_lKQ$, and compute the precoding matrices via~(\ref{eq:W_opt}). Then, send the precoding coefficients to the \glspl{ap} in $\mathcal{L}_\mathrm{a}^\star$.
\end{enumerate}

\section{Performance Evaluation}

\begin{figure*}[!t]
\centering
% \resizebox{.6\linewidth}{!}{
\includegraphics{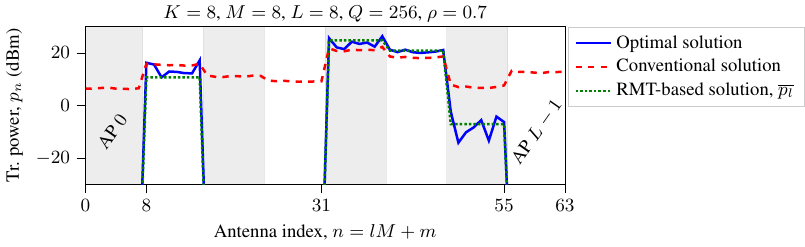}
% }
% \vspace{-.2cm}
\caption{Transmit power as a function of the antenna index for a single realization of a per-AP Kronecker channel. The graph shows the transmit power per antenna for: the optimal solution~(\ref{eq:W_opt}), the conventional one~(\ref{eq:W_con}), and the one based on \gls{rmt} (Theorem~\ref{th1}).}
\label{fig:2}
\end{figure*}

In the numerical simulations, we assume a square deployment area of $1\times1$ km\textsuperscript{2} with $16$ possible \gls{ap} locations on a regular grid, where the positions of the $L$ \glspl{ap} are randomly selected at every channel realization. The users are uniformly distributed within the scenario, with a minimum distance of $10$ m to an \gls{ap}, and their locations are randomly generated at every channel realization. We consider a constant number of antennas per \gls{ap}, set to $M=8$. When characterizing the average gains in power consumption, we consider a number of users varying from $K=1$ to $K=19$, as well as a number of \glspl{ap} ranging from $L=2$ to $L=10$. 
The target \glspl{snr} $\{\gamma_k\}$ are randomly generated between $1$ dB and $20$ dB with a uniform distribution, while the noise power is $10\log_{10}\left(\sigma_\nu^2\right)+30 = -96$ dBm.
The \gls{lsf} coefficients are computed (in dB) as $10\log_{10}(\beta_{k,l}) = -30.5 - 37.6\log_{10}d_{k,l}+F_{k,l}$~\cite{3gpp},
where $d_{k,l}$ is the distance (in meters) between the user $k$ and the \gls{ap} $l$, and $F_{k,l}\sim\mathcal{N}(0,16)$ represents the shadow fading. 
The correlation between the antennas of an \gls{ap} follows the exponential model, i.e., $\left[\mat{C}_{\mathrm{AP},l}\right]_{m,m'}=\rho^{\abs{m-m'}}$~\cite{Loyka01}, where we set $\rho=0.7$. We consider a \gls{pa} maximal transmit power and efficiency of $p_\mathrm{max} = 3$ W and $\eta_\mathrm{max} = 0.34$, respectively~\cite{Qorvo}. The fixed power consumption is set to $P_{\mathrm{fix}} = 15$ W, while the circuit power consumption per antenna is set to $P_\mathrm{c} = 0.7$ W~\cite{Auer11}.

\subsection{Single Channel Realization}
Fig.~\ref{fig:2} shows the antenna powers for a single channel realization, $K=8$ users, $L=8$ \glspl{ap} and $Q=256$ subcarriers. We observe that the powers among the antennas of an \gls{ap} share similar values, with a certain variability that is likely due to the correlation among antennas. The first and last antennas, being the least correlated given that we are considering a uniform linear array, are allocated more power. The \gls{rmt}-based solution uses a uniform power per \gls{ap}, and this power approximates the average of the optimal solution among the $M$ antennas of the \gls{ap}. Fig.~\ref{fig:2} indicates also that the results in Theorem~\ref{th1}, even though derived for $M,K\to\infty$, are accurate for finite and relatively low values of $M$ and $K$.

It is interesting to notice how the solution that minimizes the \glspl{pa} consumed power turns on/off entire \glspl{ap}. All the antennas of the active \glspl{ap} are utilized, and vice versa. Instead, the solution that optimizes transmit power always activates all the \glspl{ap}. The consumed power solution utilizes $L_\mathrm{a}=4$ \glspl{ap}, corresponding to a halving in the fronthaul requirements. The gain in network consumption (i.e., the ratio between $P_\mathrm{net}$ by using the proposed solution and $P_\mathrm{net}$ by using the conventional solution), equals $1.78$ both using the optimal solution and the \gls{rmt}-based one, while the gains in \glspl{pa} consumption are $1.28$ and $1.27$, respectively.

\subsection{Impact of Number of Subcarriers}
We observe, via numerical experiments, that the uniform power allocation per AP in $\mathbf{(As2)}$ approximates well the optimal solution when $Q$ grows. This is in line with the findings in~\cite{Peschiera23} for cellular systems, where uniform allocation among all the \gls{bs} antennas is optimal (in terms of \glspl{pa} consumption) when $Q$ is large and there is no spatial correlation. Fig.~\ref{fig:4} depicts the ratio between the values of $P_\mathrm{PAs}$ by using the optimal~(\ref{eq:p_n}) and asymptotic~(\ref{eq:pbar}) transmit powers as a function of $Q$.
For a few subcarriers, the optimal solution activates only a part of the antennas of certain \glspl{ap}. When $Q$ is sufficiently large, instead, the two strategies achieve the same performance. This tells us that the essential information is given by the set $\mathcal{L}_\mathrm{a}^\star$. Once we know which \glspl{ap} to activate, we can use a uniform power allocation among them with a negligible performance loss.
\begin{figure}[!ht]
\centering
\includegraphics{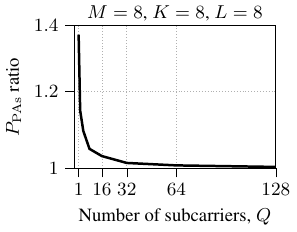}
% \vspace{-.2cm}
\caption{Ratio between \gls{pa} consumption by using the asymptotic solution and the optimal one as a function of the number of subcarriers, averaged over ${10}^2$ channel realizations.}
\label{fig:4}
\end{figure}

\subsection{Achievable Gains in Network Consumption}
Fig.~\ref{fig:3} illustrates the average achievable savings in network consumption by using the asymptotic solution $\left\{\overline{p_l}\right\}$ over the conventional one for $Q=256$ subcarriers. As expected, the more the \glspl{ap}, the larger the gains, as the consumed power solution activates only a fraction of the $L$ \glspl{ap} in low and medium loads. The savings decrease as $K$ increases because (i) more users require the activation of more antennas and (ii) it is more likely that a user will be in the proximity of each \gls{ap}. For $L=10$ \glspl{ap}, the gain reaches a factor of $9\times$ when there is $K=1$ user and remains large also for $K=7$, when it equals $2$. For $L=8$ \glspl{ap} and $K=9$ users, the saving is still equal to $1.5$ (corresponding to a $50\%$ less consumed power). In general, the ratios become lower or equal than $1.5$ when $K>L$.
\begin{figure}[!ht]
\centering
\includegraphics{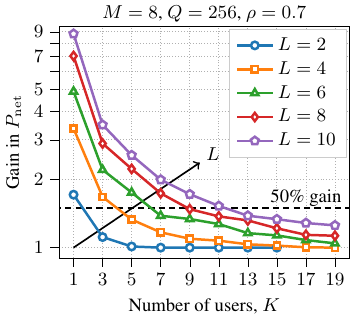}
% \vspace{-.2cm}
\caption{Gain in network power consumption (of utilizing the \gls{rmt}-based solution against the conventional one), averaged over ${10}^2$ channel realizations, as a function of the number of users for different number of \glspl{ap}, $M=8$ antennas, and $Q=256$ subcarriers.}
\label{fig:3}
\end{figure}

\section{Conclusion}
The problem of designing \gls{cfmmimo} precoders that minimize the \glspl{pa} consumption subject to rate constraints with centralized processing has been addressed. Under a per-\gls{ap} Kronecker channel model, we applied \gls{rmt} to derive simple fixed-point equations in the antenna powers that depend only on the channel covariances and are accurate when the number of subcarriers grows. The optimal strategy consists of activating all the antennas of certain \glspl{ap} with uniform powers and deactivating all the antennas of the remaining \glspl{ap}. By switching off the circuits of the non-active \glspl{ap}, we achieved significant savings in the network power consumption in low and medium loads. Perspectives include the consideration of local processing (i.e., performed at the individual \glspl{ap}) and the analysis of user-centric implementations.

% -------------------------------------------------------------------------
\vfill\pagebreak
\balance
\bibliographystyle{IEEEbib}
\bibliography{refs}

\end{document}